\documentclass[11pt,a4paper]{article}

\usepackage{jheppub}
\usepackage{psfrag}
\usepackage{slashed}
\usepackage{cancel}
\usepackage{lscape}

\def\la{\langle}
\def\ra{\rangle}

\definecolor{mygreen}{rgb}{0,0.7,0}


\def\cN{\mathcal{N}}

\def\A#1#2{\la#1#2\ra}

\def\AB#1#2#3{\la#1|#2|#3]}

\def\eps{\epsilon}

\def\qb{{\bar{q}}}
\def\sb{{\bar{s}}}


\newcommand{\define}{\equiv}

\definecolor{myblue}{rgb}{0,0.6,0.6}

\usepackage{amsthm}
\newtheorem{lemma}{Lemma}

\preprint{}

\title{An Integrand Reconstruction Method for Three-Loop Amplitudes}

\author[a]{Simon Badger}
\author[a]{Hjalte Frellesvig}
\author[a]{Yang Zhang}
\affiliation[a]{
Niels Bohr International Academy and Discovery Center, The Niels Bohr Institute,\\%
University of Copenhagen, Blegdamsvej 17, DK-2100 Copenhagen, Denmark}
\emailAdd{badger@nbi.dk,hjf@nbi.dk,zhang@nbi.dk}

\abstract{
We consider the maximal cut of a three-loop four point function with massless kinematics.
By applying Gr\"obner bases and primary decomposition we develop a method which extracts
all ten propagator master integral coefficients for an arbitrary triple-box configuration via generalized unitarity cuts.
As an example we present analytic results for the three loop triple-box contribution to gluon-gluon scattering
in Yang-Mills with adjoint fermions and scalars in terms of three master integrals.
}

\keywords{}

\begin{document}
\maketitle
\flushbottom

\section{Introduction}

Precise predictions for cross sections in collider experiments require knowledge of scattering
amplitudes to a high loop order. Computations of this type, particularly in QCD, are still
unfortunately out of reach using known techniques. Unitarity cuts have shown to be an enormously
powerful tool for super-symmetric gauge and gravity theories with full amplitudes now computed up to
four loops \cite{Bern:2009kd,Bern:2010tq}. At three-loops, steps beyond maximally super-symmetric amplitudes have
also been considered in the recent computation of the UV behaviour of graviton-graviton scattering in
$\mathcal{N}=4$ supergravity \cite{Bern:2012cd}. For a recent review of state of the techniques see \cite{Carrasco:2011hw}
and references therein.

The success of unitarity \cite{Bern:1994zx,Bern:1994cg} and generalised unitarity
\cite{Britto:2004nc} in automating multi-loop one-loop corrections has sparked some recent interest
in exploring the application of integrand reduction (OPP-like \cite{Ossola:2006us}) methods at
two-loops
\cite{Mastrolia:2011pr,Kosower:2011ty,Badger:2012dp,Larsen:2012sx,CaronHuot:2012ab,Kleiss:2012yv}.
Very recently the possibility of using computational algebraic geometry to overcome the traditional
bottlenecks in amplitude computations has started to be explored
\cite{Zhang:2012ce,Mastrolia:2012an}.

The traditional approach to a unitarity computation of a loop amplitude relies on the knowledge of
a basis of known integral functions. At two loops it has been possible to derive sets of
integration-by-parts identities \cite{Chetyrkin:1981qh,Gehrmann:1999as} that reduce amplitudes to a unitarity
compatible basis \cite{Gluza:2010ws,Schabinger:2011dz}. Using an integrand parametrisation
constrained by the ideal generated from the propagators with help from Gram matrix identities we can also reduce a unitarity compatible
form of the amplitude to master integrals using the well known Laporta algorithm \cite{Laporta:2001dd} which
is implemented in a number of public codes \cite{Anastasiou:2004vj,Smirnov:2008iw,Studerus:2009ye,vonManteuffel:2012yz}.

In this paper we consider the extension of the new integrand reduction techniques to three-loop
amplitudes. We derive a complete reduction to ten-propagator master integrals (MI) for the maximal cut of
the three loop planar triple box with four massless external legs. As a first application
of the technique we compute the ten-propagator MI coefficients for gluon scattering in Yang-Mills theory with
adjoint fermions and scalars. The result applies in both $\mathcal{N}=4,2,1$ and $0$ super-symmetric
theories.

\section{Reduction of the Massless Triple Box \label{sec:triple}}

\begin{figure}[b]
  \centering
  \psfrag{1}{\small$p_1$}
  \psfrag{2}{\small$p_2$}
  \psfrag{3}{\small$p_3$}
  \psfrag{4}{\small$p_4$}
  \psfrag{l1}{\small$l_1$}
  \psfrag{l2}{\small$l_2$}
  \psfrag{l3}{\small$l_3$}
  \psfrag{l4}{\small$l_4$}
  \psfrag{l5}{\small$l_5$}
  \psfrag{l6}{\small$l_6$}
  \psfrag{l7}{\small$l_7$}
  \psfrag{l8}{\small$l_8$}
  \psfrag{l9}{\small$l_9$}
  \psfrag{l10}{\small$l_{10}$}
  \includegraphics[width=0.8\textwidth]{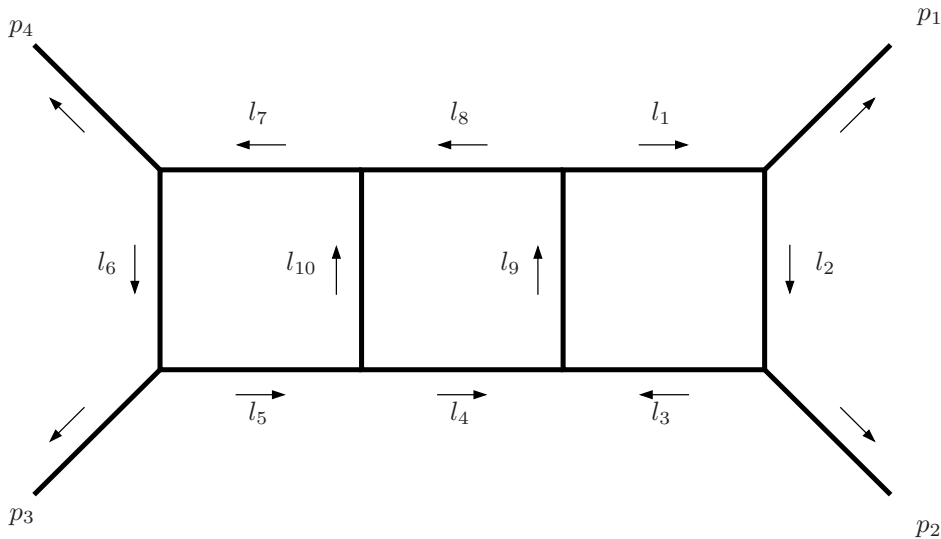}
  \caption{The momentum flow and propagator definitions for the three-loop planar triple box.}
  \label{fig:boxboxbox}
\end{figure}

In this paper we will study the three-loop planar triple box topology, shown in figure \ref{fig:boxboxbox}, defined by:

\begin{align}
  I_{10}[\tilde{N}_D(\eps,k_i,p_j)] =
  \int \! \frac{d^D k_1}{(2\pi)^D}
  \int \! \frac{d^D k_2}{(2\pi)^D}
  \int \! \frac{d^D k_3}{(2\pi)^D}
  \frac{ \tilde{N}_D(\eps,k_i,p_j) }
  {\prod_{i=1}^{10} l_i^2 },
  \label{eq:tripleboxdef}
\end{align}

where the ten propagators $\{l_i\}$ are given by:
\begin{align}
  l_1 &= k_1, & l_2 &= k_1 - p_1, & l_3 &= k_1 - p_1 - p_2, & l_4 &= k_3 + p_1 + p_2, \nonumber \\
  l_5 &= k_2 + p_1 + p_2, & l_6 &= k_2 - p_4, & l_7 &= k_2, & l_8 &= k_3, \nonumber \\
  l_9 &= k_1 + k_3, & l_{10} &= k_2 - k_3. &&
  \label{eq:lmomdefs}
\end{align}

The external momenta, $\{p_i\}$, and internal momenta, $\{l_i\}$, are considered massless.
Though the integral needs to be dimensionally regulated in $D=4-2\eps$ dimensions we will
only consider generalized unitarity cuts in four dimensions and therefore reconstruct only
the leading term of the numerator function, $\tilde{N}_D(\eps,k_i,p_j) = \tilde{N}(k_i,p_j) +
\mathcal{O}(\eps)$.

We also consider the external momenta to be outgoing with the Mandelstam variables defined by:
\begin{align}
  s &= (p_1+p_2)^2, & t &= (p_1+p_4)^2, & u &= -s-t.
  \label{eq:mandelstam}
\end{align}

Following a similar approach taken at two loops in our previous work \cite{Badger:2012dp}, we will
proceed in three steps: firstly the on-shell constraints for the deca-cut will be solved. Secondly,
we invert a linear system to map the polynomial of the deca-cut defined by the on-shell constraints
onto the general integrand basis. Finally the integrand is reduced to master integrals(MIs) using
additional integration-by-parts(IBP) relations.

\subsection{Solving The On-shell Constraints}

The parametrisation of the loop-momenta using two component Weyl spinors has been chosen as follows:
\begin{align}
l_2 &= x_1 p_1 + x_2 p_2 + x_3 \frac{\A{2}{3}}{\A{1}{3}} \frac{\langle p_1 | \gamma^{\mu} | p_2
]}{2} + x_4 \frac{\A{1}{3}}{\A{2}{3}} \frac{\langle p_2 | \gamma^{\mu} | p_1 ]}{2}, \nonumber \\
-l_6 &= y_1 p_3 + y_2 p_4 + y_3 \frac{\A{4}{1}}{\A{3}{1}} \frac{\langle p_3 | \gamma^{\mu} | p_4
]}{2} + y_4 \frac{\A{3}{1}}{\A{4}{1}} \frac{\langle p_4 | \gamma^{\mu} | p_3 ]}{2}, \nonumber \\
-l_4 &= z_1 p_2 + z_2 p_3 + z_3 \frac{\A{3}{4}}{\A{2}{4}} \frac{\langle p_2 | \gamma^{\mu} | p_3
]}{2} + z_4 \frac{\A{2}{4}}{\A{3}{4}} \frac{\langle p_3 | \gamma^{\mu} | p_2 ]}{2}.
\label{loopbasis}
\end{align}
The ten cut-constraints $l_i^2 = 0$ form a set of ideals. Following the recent method proposed by
Zhang \cite{Zhang:2012ce}, this system can be reduced using primary
decomposition to find 14 independent solutions. These solutions come in complex conjugate\footnote{Technically explicit complex conjugation is only valid for real external momenta.
However, the solutions $1'-7'$ are also valid for complex external momenta.}
pairs and we label them $1-7$ and $1'-7'$. We find that all 14 solutions have the same
dimension and therefore can be parametrised with two variables which we call $\tau_1$ and $\tau_2$.
For the solutions $1-7$, explicit forms for the coefficients in (\ref{loopbasis}) can be written as in tables \ref{xytable} and \ref{ztable}.

\begin{table}[h]
\centering
\(\begin{array}{|c|c|c|c|c|}
\hline
 & x_1 & x_2 & x_3 & x_4 \\ \hline
1 & 0 & 0 & 1-\frac{s}{t} \frac{1+\tau_1}{\tau_2} & 0 \\
2 & 0 & 0 & -\frac{u}{t}(1+\tau_1) & 0 \\
3 & 0 & 0 & \tau_2 & 0 \\
4 & 0 & 0 & -\frac{u}{t}(1+\tau_1) & 0 \\
5 & 0 & 0 & \tau_2 & 0 \\
6 & 0 & 0 & 1 & 0 \\
7 & 0 & 0 & \tau_1 & 0 \\ \hline
\end{array} \;\;\;\;\; \)
\(\begin{array}{|c|c|c|c|c|}
\hline
 & y_1 & y_2 & y_3 & y_4 \\ \hline
1 & 0 & 0 & 1+\tau_2 & 0 \\
2 & 0 & 0 & 0 & -1+\frac{s}{u}\frac{1}{\tau_1} \\
3 & 0 & 0 & 0 & -1-\tau_1 \\
4 & 0 & 0 & 0 & \tau_2 \\
5 & 0 & 0 & 1 & 0 \\
6 & 0 & 0 & \tau_2 & 0 \\
7 & 0 & 0 & \tau_2 & 0 \\ \hline
\end{array}\)
\caption{Values of the coefficients in $l_2$ and $-l_6$ in (\ref{loopbasis}).}
\label{xytable}
\end{table}

\begin{table}[h]
\centering
\(\begin{array}{|c|c|c|c|c|}
\hline
 & z_1 & z_2 & z_3 & z_4 \\ \hline
1 & \frac{s}{t} \left( 1 + \frac{1}{\tau_1} \right) \! \left( 1 + \frac{1}{\tau_2} \right) & -\frac{s}{t} \left( 1 + \frac{1}{\tau_1} \right) + \frac{\tau_2}{\tau_1} & -\frac{u}{t} \left( 1 + \frac{1}{\tau_1} \right) & Z_{41} \\
2 & \frac{s}{t} \big( 1+\tau_2 \big) \! \left( 1 - \frac{u}{s} \tau_1 \right) & \frac{s}{t}\left( 1 + \frac{1}{\tau_1} \right) \tau_2 & \left( \frac{u}{t} - \frac{s}{t} \frac{1}{\tau_1} \right) \tau_2 & -\frac{s}{t}(1+\tau_1)(1+\tau_2) \\
3 & \frac{s}{t} \left( 1 + \frac{1}{\tau_1} \right) & 0 & 0 & \frac{s}{t} \left( \frac{s}{u} \frac{1}{\tau_1} - 1 \right) \\
4 & 0 & -\frac{s}{t} \left( 1 + \frac{1}{\tau_1} \right) & \frac{s}{t} \frac{1}{\tau_1} - \frac{u}{t} & 0 \\
5 & \tau_1 & 0 & 0 & \frac{s}{u} ( 1 + \tau_1 ) \\
6 & 0 & \tau_1 & 0 & \frac{s}{u} ( 1 - \tau_1 )  \\
7 & 0 & 0 & 0 & \frac{s}{u} \\ \hline
\end{array}\)
\caption{Values of the coefficients in $-l_4$ in (\ref{loopbasis}). \(Z_{41} \define \frac{s^2}{u t} \left( 1 + \frac{1}{\tau_1} \right) \! \left( 1 + \frac{1}{\tau_2}
\right) - \frac{s}{u} \frac{1 + \tau_2}{\tau_1}\).}
\label{ztable}
\end{table}

It is worth mentioning that the parametrisation is Laurent-polynomial, so that no terms of the form
\(\frac{1}{1+\tau}\) appear. We will therefore be able to fit the integrand using an efficient
discrete Fourier projection just as used succesfully at one-loop
\cite{Berger:2008sj,Mastrolia:2008jb}.

The conjugate solutions can be easily constructed from the above expressions using,
\begin{align}
x_1^{s'}  &= 0 & x_2^{s'}  &= 0 &
  x_3^{s'}  &= \tfrac{u}{t} x_4^s  & x_4^{s'}  &= \tfrac{t}{u} x_3^s  \nonumber \\
y_1^{s'}  &= 0 & y_2^{s'}  &= 0 &
  y_3^{s'}  &= \tfrac{u}{t} y_4^s  & y_4^{s'}  &= \tfrac{t}{u} y_3^s  \nonumber \\
z_1^{s'}  &= z_1^s  & z_2^{s'}  &= z_2^s  &
  z_3^{s'}  &= \tfrac{u}{s} z_4^s  & z_4^{s'}  &= \tfrac{s}{u} z_3^s.
\label{xyzprime}
\end{align}
Note that we have suppressed the functional dependence of the coefficients, e.g. $x_3^{s}\equiv
x_3^{s}(\tau_1,\tau_2)$.

\subsection{Fitting the Integrand Basis \label{sec:basis}}

The space of loop momenta is spanned by three external momenta, which we choose to be
$\{p_1,p_2,p_4\}$, and an additional orthogonal direction $\omega$ which is defined by,
\begin{align}
  \omega &\define \frac{1}{2s} \Big( \AB{2}{3}{1} \AB{1}{\gamma^{\mu}}{2} - \AB{1}{3}{2}
  \AB{2}{\gamma^{\mu}}{1} \Big).
\end{align}

The integrand numerator $N$, can be parametrised in terms of seven irreducible
scalar products (ISPs) as follows:
\begin{align}
  N = \sum_{\alpha_i} &c_{\alpha_1\dots\alpha_7}
  (k_1 \cdot p_4)^{\alpha_1}
  (k_2 \cdot p_1)^{\alpha_2}
  (k_3 \cdot p_4)^{\alpha_3}
  (k_3 \cdot p_1)^{\alpha_4}\nonumber\\&\times
  (k_1 \cdot \omega)^{\alpha_5}
  (k_2 \cdot \omega)^{\alpha_6}
  (k_3 \cdot \omega)^{\alpha_7}.
  \label{eq:integrandbasis}
\end{align}

We can use renormalizability conditions to restrict the maximum powers of $\alpha_i$ and complete
the reduction by using polynomial division with respect to a Gr\"obner basis constructed from the
propagators constraints. This procedure is implemented in the Mathematica package
{\tt BasisDet} and we refer to \cite{Zhang:2012ce} for full details of the method.
The resultant integrand naturally splits into 199 spurious (S) and 199 non-spurious (NS) terms of the
form:
\begin{align}
  N^{\rm NS} &= \sum_{\alpha_i}
  (k_1 \cdot p_4)^{\alpha_1}
  (k_2 \cdot p_1)^{\alpha_2}
  (k_3 \cdot p_4)^{\alpha_3}
  (k_3 \cdot p_1)^{\alpha_4}\nonumber\\&\times
  \left( c^{\rm NS}_{\alpha_1\dots\alpha_4 0}+c^{\rm NS}_{\alpha_1\dots\alpha_4 1}(k_1 \cdot \omega)(k_2 \cdot \omega) \right),
  \label{eq:integrandbasisNS} \\
  N^{\rm S} &= \sum_{\alpha_i}
  (k_1 \cdot p_4)^{\alpha_1}
  (k_2 \cdot p_1)^{\alpha_2}
  (k_3 \cdot p_4)^{\alpha_3}
  (k_3 \cdot p_1)^{\alpha_4}\nonumber\\&\times
  \left(
   c^{\rm S}_{\alpha_1\dots\alpha_4 0} (k_1 \cdot \omega)
 + c^{\rm S}_{\alpha_1\dots\alpha_4 1} (k_2 \cdot \omega)
 + c^{\rm S}_{\alpha_1\dots\alpha_4 2} (k_3 \cdot \omega)
  \right). \label{eq:integrandbasisS}
\end{align}
The spurious terms will vanish after integration since they are linear in $k_i \cdot \omega$.

Inserting the expressions for the loop-momenta given in \eqref{loopbasis} into the integrand
$N = N^{NS}+N^{S}$, defines 14 Laurent-polynomials of $\tau_1$ and $\tau_2$,
\begin{align}
  N |_{s} &= \sum_{ij} d_{sij} \tau_1^i \tau_2^j,
  \label{dexpansion}
\end{align}
where $s$ denotes one of the on-shell solutions either $1-7$ or $1'-7'$. In total we find $622$
terms in these expansions which we collect into a vector, $\bf d$.
After collecting the 398 coefficients of \eqref{eq:integrandbasis} into a vector $\bf c$
we can define a Matrix $M$ such that,
\begin{align}
  \mathbf{d} = M \mathbf{c}.
\end{align}
After inverting this matrix we can re-construct an arbitrary integrand by using the Laurent expansion
of the product of eight tree-level amplitudes to extract the values of the coefficients, $d_{sij}$.

Though in principle we could invert the matrix in its full $622\times398$ form this task is
quite complicated in practice. However, it is straightforward to separate the problem into two pieces
corresponding the spurious and non-spurious terms. We can show that,
\begin{align}
  k^s(\tau_1,\tau_2) \cdot p &= k^{s'}(\tau_1,\tau_2) \cdot p, &
  k^s(\tau_1,\tau_2) \cdot \omega &= -k^{s'}(\tau_1,\tau_2) \cdot \omega.
  \label{plusminusrelation}
\end{align}
Therefore when we combine the paired solutions we find,
\begin{align}
  N |_{s} + N |_{s'} &= N^{\rm NS} |_{s} + N^{\rm NS} |_{s'} = 2N^{\rm NS}
  |_{s}, \\
  N |_{s} - N |_{s'} &= N^{\rm S} |_{s} - N^{\rm S} |_{s'} = 2N^{\rm S} |_{s}.
  \label{eq:combine}
\end{align}
After defining two new vectors, ${\bf d}_\pm$, where
\begin{align}
  d_{\pm}^s = \frac{d^s \pm d^{s'}}{2},
\end{align}
we find smaller matrices $311\times199$, $M_\pm$, satisfying
\begin{align}
  \mathbf{d}_+ &= M_{+} \mathbf{c}^{\rm NS}, &
  \mathbf{d}_- &= M_{-} \mathbf{c}^{\rm S}.
  \label{eq:S+NSmatrix}
\end{align}
In this form it was possible to invert the matrices analytically using standard computer algebra
packages.

\subsection{Alternative Branch-by-branch Fitting}

In this section we describe a technique that allows each branch of the on-shell solutions to be
considered as a separate linear system, a natural extension of the strategy taken at the end of the
previous section when we separated spurious and non-spurious terms.  This will be achieved by first
partially fitting the integrand on the $14$ solutions to get $14$ polynomials. Then the $14$
polynomials are combined together to get the full integrand, by a Gr\"obner basis method. Some of
the more mathematical details are described in the Appendix \ref{app:branch}.

On each of the $14$ solutions the integrand numerator $N$ can be reduced further to a polynomial
$N_i$ with much fewer monomials, by the multivariate division towards the Gr\"obner basis of the
corresponding branch. We note that $N_i\equiv N_i(\{\rm ISPs\})$ is distinct from $N|_i\equiv
N|_i(\tau_1,\tau_2)$. The number of monomials of each $N_i$ is listed in table \ref{branch-by-branch}.

\begin{table}[h]
\centering
\begin{tabular}{|c|c||c|c|}
\hline
 & number of terms & &number of terms\\
\hline
1 & 75 & 1$'$ & 75\\
2 & 59 & 2$'$ & 59\\
3 & 42 & 3$'$ & 42\\
4 & 42 & 4$'$ & 42\\
5 & 22 & 5$'$ & 22\\
6 & 22 & 6$'$ & 22\\
7 & 25 & 7$'$ & 25\\
\hline
\end{tabular}
\caption{Number of terms for reduced integrand in each solution.}
\label{branch-by-branch}
\end{table}

For example, divided by the Gr\"obner basis of the second branch, the
integrand is reduced to a linear combination of $59$ terms,
\begin{gather}
  \label{eq:5}
  \big\{x^4 y^4,x^4 y^3,x^3 y^4,x^4 z^2,x^4 y^2,x^3 z^3,x^3 y^3,x^2 z^4,x^2 y^4,x^4
   z,x^4 y,x^3 z^2,\nonumber  \\  x^3 y^2,x^2 z^3,
x^2 y^3,x z^4,x y^4,x^4,x^3 z,x^3 y,x^2 z^2,x^2
   y^2, x z^3,x y^3,z^4,y^4,x^3,\nonumber  \\  x^2 z,x^2 y,x z^2,x y^2,z^3,y^3,x^2,x z,x
   y,z^2,y^2,x,z,y,1\big\},
\end{gather}
where $x=k_1 \cdot p_4$, $y=k_2 \cdot p_1$, $z=k_3\cdot p_4$.
The coefficients are determined by polynomial fitting
techniques, at the second solution.
\begin{equation}
  \label{eq:6}
  \boldsymbol d^{(2)}=M^{(2)} \boldsymbol c^{(2)},
\end{equation}
where $\boldsymbol c^{(2)}$ is the list of the $59$ coefficients and $\boldsymbol d^{(2)}$
contains the Laurent expansion coefficients at the second
solution. $M^{(2)}$ is a $62\times 59$ matrix, which is much smaller
than $M$, so it is easy to compute $\boldsymbol c^{(2)}$ and the reduced integrand
$N_2$.

Once all $14$ $N_i$'s are obtained, we ``merge'' them together to get the integrand $N$. The step is
equivalent to solving congruence equations in a polynomial ring. It is automatically done by a
Macaulay2 program \cite{M2}, based on a computation using Gr\"obner basis and ideal intersection. A more detailed
mathematical description of the procedure is outlined in Appendix \ref{app:branch}.

\section{Results for gluon-gluon scattering in Yang Mills theories \label{sec:4ptMIcoeffs}}

The starting expression for the computation is the product of tree-level amplitudes:
\begin{align}
  N|_s &= \sum_{f_i} \sum_{h_i} F_{f_1\dots f_5}(n_f,n_s)
  \; A^{(0)}_3 \! (-l_{1,f_1}^{-h_1}, p_1^{\lambda_1}, l_{2,f_1}^{h_2})
  \; A^{(0)}_3 \! (-l_{2,f_1}^{-h_2}, p_2^{\lambda_2}, l_{3,f_1}^{h_3}) \nonumber \\& \;\;\; \times
  A^{(0)}_3 \! (-l_{3,f_1}^{-h_3}, -l_{4,f_2}^{-h_4}, l_{9,f_4}^{h_9})
  \; A^{(0)}_3 \! (l_{4,f_2}^{h_4}, -l_{5,f_3}^{-h_5}, l_{10,f_5}^{h_{10}})
  \; A^{(0)}_3 \! (l_{5,f_3}^{h_5}, p_3^{\lambda_3}, -l_{6,f_3}^{-h_6}) \nonumber \\& \;\;\; \times
  A^{(0)}_3 \! (l_{6,f_3}^{h_6}, p_4^{\lambda_4}, -l_{7,f_3}^{-h_7})
  \; A^{(0)}_3 \! (l_{7,f_3}^{h_7}, -l_{8,f_2}^{-h_8}, -l_{10,f_5}^{-h_{10}})
  \; A^{(0)}_3 \! (l_{8,f_2}^{h_8}, l_{1,f_1}^{h_1}, -l_{9,f_4}^{-h_9}) ,
  \label{eq:treeprod}
\end{align}
where $l_i$ and $p_j$ are defined in \eqref{eq:lmomdefs}. $\{h_i\}$ are the internal helicity
states and $\{\lambda_i\}$ are the helicities of the external gluons. There are 34 distinct
configurations of the internal flavours $\{f_i\}$ each associated with a number of fermion flavours,
$n_f$ and complex scalar flavours, $n_s$. The explicit expressions for the flavour coefficients
$F_{f_1\dots f_5}(n_f,n_s)$ are given in Appendix \ref{app:flav}.

Complete expressions for all integrand coefficients have been found as functions of $n_f$ and $n_s$.
Though each term is relatively compact, the full set of 398 coefficients makes the expression rather
lengthy so we only present the result after further reduction to MIs. The full expressions for the
integrands are available from the authors on request. We note however that all coefficients in the
Laurent expansion \eqref{dexpansion} with $|i|$ or $|j| > 4$ vanish for all $n_f$ and $n_s$
demonstrating the general renormalization conditions for this case are simpler than assumed when
constructing the integrand in section \ref{sec:basis}.

IBP relations generated with {\tt Reduze2} \cite{Studerus:2009ye,vonManteuffel:2012yz} reduced the
199 non-vanishing integrals in the triple box integrand onto 3 Master Integrals, $I_{10}[1]$,
$I_{10}[(k_1+p_4)^2]$ and $I_{10}[(k_3-p_4)^2]$. The analytic expression for the scalar integral $I_{10}[1]$ has
been known already for some time \cite{Smirnov:2003vi}. We should be clear that to obtain the reduction of a full amplitude to
all master integrals the procedure must be carried out in two steps. Firstly the full integrand, without applying IBPs, must be kept
as a subtraction term for each lower propagator topology. Secondly IBP relations, again including all lower propagator MIs, should be
applied to the complete amplitude.

In order to map the basis of 7 ISPs of \eqref{eq:integrandbasisNS} onto the 15
propagator\footnote{For the purposes of the IBPs we consider both positive and negative powers of
the propagators in the topology hence the 15 propagators includes both the 10 denominators of eq.
\eqref{eq:tripleboxdef} and the 5 non-spurious ISPs.} integral topology obeying IBP relations we must re-write the non-spurious term $(k_1 \! \cdot \! \omega) \; (k_2 \! \cdot \! \omega)$
in terms of the reducible quantity $k_1 \! \cdot \! k_2$. This is achieved straightforwardly using the Gram determinant identity
generated by:
\begin{align}
  \text{det}
  \begin{pmatrix}
    1 & 2 & 4 & \omega & k_1\\
    1 & 2 & 4 & \omega & k_2
  \end{pmatrix}
  = 0,
\end{align}
which is re-written as:
\begin{equation}
  (k_1\cdot\omega) (k_2\cdot\omega) = -\frac{t^2}{4}+\frac{t}{2} \big( (k_1\cdot4)+(k_2\cdot1) \big)
  +\frac{tu}{s} (k_1\cdot k_2) + \frac{s+2t}{s} (k_1\cdot4) (k_2\cdot1).
\end{equation}
We note that this does not change the number of non-spurious integrals in the integrand basis.

We write the 4-point 3-loop primitive amplitude for this ladder topology as:
\begin{align}
  A^{(3)}_4(1^{\lambda_1},2^{\lambda_2},3^{\lambda_3},4^{\lambda_4}) &=
    C_1(1^{\lambda_1},2^{\lambda_2},3^{\lambda_3},4^{\lambda_4}) I_{10}[1] \nonumber\\&
  + C_2(1^{\lambda_1},2^{\lambda_2},3^{\lambda_3},4^{\lambda_4}) I_{10}[(k_1+p_4)^2] \nonumber\\&
  + C_3(1^{\lambda_1},2^{\lambda_2},3^{\lambda_3},4^{\lambda_4}) I_{10}[(k_3-p_4)^2]
  + \ldots
  \label{eq:3lamp_ladder}
\end{align}
where the ellipses cover terms with less than ten propagators. We then define dimensionless coefficients $\hat{C}_k$ by,
\begin{equation}
  C_k(1^{\lambda_1},2^{\lambda_2},3^{\lambda_3},4^{\lambda_4}) =
  s^3t A^{(0)}_4(1^{\lambda_1},2^{\lambda_2},3^{\lambda_3},4^{\lambda_4}) \hat{C}^{\lambda_1\lambda_2\lambda_3\lambda_4}_k(s,t),
  \label{eq:coeffnorm}
\end{equation}
where $A_4^{(0)}$ are the tree-level helicity amplitudes (expressions for the tree-level amplitudes are
collected in Appendix \ref{app:flav} for convenience).
After following the procedure of computing the Laurent series of \eqref{eq:treeprod},
reconstructing the integrand and applying the IBPs we obtain the following results for the
three independent helicity configurations. The $--++$ turns out to be trivial,
\begin{align}
  \hat{C}^{--++}_1(s,t) &= -1, \\
  \hat{C}^{--++}_2(s,t) &= 0, \\
  \hat{C}^{--++}_3(s,t) &= 0.
  \label{eq:3lmmpp}
\end{align}
The $-++-$ configuration is,
\begin{align}
  \hat{C}^{-++-}_1(s,t) &= \nonumber\\&
  - 1
  - (4-n_f)(3-n_s) \frac{s(t+2s)}{2t^2}
  + (4-n_f) \frac{s(t+4s)}{2t^2} \nonumber\\&
  - (1+n_s-n_f) \frac{s}{t^3}\left( 2t^2 + 11st + 10s^2 \right),
  \\
  \hat{C}^{-++-}_2(s,t) &= \frac{2}{t}\left(1-\hat{C}^{-++-}_1 \right),\\
  \hat{C}^{-++-}_3(s,t) &= 0,
  \label{eq:3lmppm}
\end{align}
and finally the alternating $-+-+$,
\begin{align}
  \hat{C}^{-+-+}_1(s,t) &= \nonumber\\&
  - 1
  + (4-n_f) \frac{st}{u^2}
  - 2(1+n_s-n_f) \frac{s^2 t^2}{u^4} \nonumber\\&
  + \big( 2(1-2n_s)+n_f \big)(4-n_f) \frac{s^2t(2t-s)}{4u^4} \nonumber\\&
  - \big( n_f(3-n_s)^2-2(4-n_f)^2 \big) \frac{st(t^2-4st+s^2)}{8u^4},
  \\
  \hat{C}^{-+-+}_2(s,t) &= \nonumber\\&
  - (4-n_f) \frac{s}{u^2}
  + 2(1+n_s-n_f) \frac{s^2 t}{u^4} \nonumber\\&
  - \big( 2(1-2n_s)+n_f \big)(4-n_f) \frac{s^2(2t-s)}{u^4} \nonumber\\&
  + \big( n_f(3-n_s)^2-2(4-n_f)^2 \big) \frac{s(t^2-4st+s^2)}{2u^4},
  \\
  \hat{C}^{-+-+}_3(s,t) &= \nonumber\\&
  + \big( 2(1-2n_s)+n_f \big)(4-n_f) \frac{3s^2(2t-s)}{2u^4} \nonumber\\&
  - \big( n_f(3-n_s)^2-2(4-n_f)^2 \big) \frac{3s(t^2-4st+s^2)}{4u^4}.
  \label{eq:3lmpmp}
\end{align}
It is easy to check that these coefficients correctly reproduce the known result in $\mathcal{N}=4$
super-symmetric Yang-Mills \cite{Bern:2005iz}. An expression valid for $\mathcal{N}>0$ super-symmetric
generators can be found by setting $n_f=\cN$ and $n_s=\cN-1$. We also note that complicated flavour
structures that only appear in the $-+-+$ coefficients vanish in the $\mathcal{N}=2$ theory,
\begin{align}
  &n_f(3-n_s)^2-2(4-n_f)^2 \to (\cN-2)(\cN-4)^2, \\
  & 2(1-2n_s)+n_f \to -3(\cN-2).
\end{align}
This shows that the integral basis for $\mathcal{N}=2$ is simpler than that of $\mathcal{N}=1$, a
feature which is new for four-dimensional maximal cuts at three-loops. At one-loop both $\mathcal{N}=1$ and $\mathcal{N}=2$ 
have the same integral basis with no rational terms. At two-loops the two theories
could differ in the lower propagator integrals but not in the maximal cut terms. We have also checked that by
taking two-particle cuts the full integrand factorizes onto the two-loop results for the full
integrand \cite{Badger:2012dp}.

\section{Conclusions}

In this paper we have shown that an integrand reduction technique based on computational algebraic
geometry techniques can be generalized to three-loop amplitudes. As a first step in this direction
we considered maximal cuts in four dimensions for the planar triple box topology contributing to
massless $2\to2$ scattering.

By using primary decomposition the 14 independent branches of the 10 on-shell conditions could be
found automatically. An explicit parametrisation of these solutions was found such that the
integrand would take the form of a simple Laurent expansion in two free variables. The integrand on
each solution factorises into a product of ten tree amplitudes which can be used to extract a set of
622 terms in the Laurent expansion.

Having derived a minimal parametrisation of the integrand in terms of 7 irreducible scalar
products we were able to construct an invertible matrix to map the coefficients of the Laurent
expansion to the coefficients of the ISPs. Though the matrix was quite large it was possible to invert
the system efficiently by using a branch-by-branch reconstruction using Gr\"obner bases and the
intersection of ideals.

Finally we were able to reduce the whole expression to master integrals using the {\tt Reduze2}
package for integration by parts identities.  This whole procedure derives a complete reduction for
an arbitrary process valid in any renormalizable gauge theory. As an application of the technique we
computed the MI coefficients for gluon-gluon scattering in Yang-Mills theory with adjoint scalars
and fermions.

Though a long way from a complete four-point amplitude in QCD the computation presented here can be
seen as a small step in the right direction. We hope the techniques presented here will be useful
when making the necessary generalizations to $D$-dimensional cuts and fewer particle cuts.

\acknowledgments{%
We would like to thank Andreas von Manteuffel and Cedric Studerus for assistance with the {\tt
Reduze2} package. We are also grateful to Pierpaolo Mastrolia and Bo Feng for useful discussions
and Alberto Guffanti for careful reading of the manuscript.
}

\appendix

\section{Branch-by-branch polynomial fitting method \label{app:branch}}

Consider generalized unitarity for an $L$-loop diagram. Let
$R$ be the polynomial ring of ISPs and $I$ be the ideal generated by
cut equations.

By the integrand-level reduction algorithm, the original integrand numerator is a polynomial $
\tilde N\in R$, while $[ \tilde N]$  is its image  in the quotient ring $R/I$.  $[\tilde N]$ can be
expanded over the {\it integrand basis}. The reduced integrand $N$, which is the simplest
representative for $[N]$, ($[N]=[\tilde N]$ in $R/I$), can be obtained by dividing $\tilde N$
towards the Gr\"obner basis $G(I)$ of $I$ \cite{Zhang:2012ce, Mastrolia:2012an} .

Often the cut solution have several branches. In other words, by
{\it primary decomposition}, $I$ is decomposed to the intersection
of several primary ideals \cite{Zhang:2012ce},
\begin{equation}
I=\bigcap_{i=1}^n I_i
\end{equation}
where $n$ is the number of branches.

Note that $I\subset I_i$, so $I_i$ contains more constraints than
$I$ itself. Hence the integrand can be reduced further on each
branch. We have the map,
\begin{eqnarray}
  \label{eq:1}
  p:  R/I &\to& R/I_1 \oplus \ldots \oplus R/I_n, \nonumber \\
   {[N]} &\mapsto& ([N]_1, \ldots, \mathcal [N]_n),
\end{eqnarray}
where $[\ldots ]$ and $[\ldots]_i$ stand for the equivalence relations
in $R/I$ and $R/I_i$.  If $N$ is known, it is straightforward
to get the simplest representative $N_i$ for $[N]_i$, by calculating the Gr\"obner basis for each
$I_i$. In general, $N_i$ contains much fewer terms than $N$.

So it is much easier to fit
each $N_i$ than $N$,  from the product of tree amplitudes. After all
$N_i$ are fitted,
the goal is to determine $N$ from $N_i$'s. The existence of $N$ is
guaranteed by the existence of the original numerator $\tilde N$. We
just need the following uniqueness condition:
\begin{lemma}
  The map $p$ is injective. In other words, if $N$ exists, then
  $N_1, \ldots , N_n$ uniquely determine
  $[N]$.
\end{lemma}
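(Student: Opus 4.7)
The plan is to prove injectivity of $p$ directly from the definition of the ideal intersection $I=\bigcap_{i=1}^n I_i$. Injectivity of a linear (more generally, module-theoretic) map is equivalent to the kernel being trivial, so I would reduce the statement to showing $\ker p = 0$.

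First, I would unwind what it means for an element $[N]\in R/I$ to lie in $\ker p$. By the definition of $p$, this is the assertion that $[N]_i = 0$ in $R/I_i$ for every $i=1,\ldots,n$, which in turn is the statement that $N \in I_i$ for every $i$. At this point the proof reduces to a tautology from the definition of intersection of ideals.

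Next, I would invoke the hypothesis $I = \bigcap_{i=1}^n I_i$ to conclude that $N\in I_i$ for all $i$ forces $N \in I$, and hence $[N]=0$ in $R/I$. This shows $\ker p = 0$, so $p$ is injective. Therefore any two lifts of the tuple $([N]_1,\ldots,[N]_n)$ to $R/I$ must coincide, which gives the stated uniqueness of $[N]$ determined by the $N_i$'s.

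There is no real obstacle here: the lemma is essentially a restatement of the universal property of the quotient by an intersection of ideals, closely related to the classical Chinese Remainder Theorem (without needing the surjectivity part, since only uniqueness is required). The only mild subtlety worth emphasizing in the write-up is that existence of a preimage $[N]$ is guaranteed separately by the existence of the original numerator $\tilde N$ coming from the unitarity construction; the lemma only supplies the uniqueness half, which is why the trivial kernel argument suffices.
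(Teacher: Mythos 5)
Your argument is correct and is essentially the paper's own proof: showing $\ker p=0$ via $N\in I_i$ for all $i$ implies $N\in\bigcap_i I_i=I$ is exactly the paper's step of taking two preimages $N,N'$ with equal images and concluding $N-N'\in I$. The kernel phrasing and the difference-of-two-lifts phrasing are the same computation, so there is nothing to add.
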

\begin{proof}
  Assume that there are two polynomials $N$ and
  $N'$ such that $p( N)=p( N')$. So $[N]_i=[N']_i=[N_i]_i$, $\forall i$.
  This means that $ N- N'\in I_i$, $\forall i$ and
  \begin{equation}
    \label{eq:2}
     N- N'\in \bigcap_{i=1}^n I_i=I,
  \end{equation}
so $[N]=[N'$] in $R/I$.
\end{proof}
This is an analogy of {\it Chinese remainder theorem}, however we do not
need the coprime condition since the existence of $N$ is guaranteed by
physics. After $[N]$ is determined, it is straightforward
to find its simplest representative $N$ by Gr\"obner basis $G(I)$ of
$I$.

In practice, we present the
following algorithm to get $N$ from $N_1,\ldots , N_n$. First, we consider the case $n=2$,
\begin{enumerate}
  \item  For two ideals $I_1=\langle f_1, \ldots, f_{m_1}\rangle$ and
  $I_2=\langle h_1 \ldots h_{m_2} \rangle$, Calculate the Gr\"obner basis $G(I_1+I_2)=\{g_1 , \ldots g_m\}$ for the ideal
    $I_1+I_2$. Record the
    transform matrix,
    \begin{equation}
      \label{eq:3}
      g_i =\sum_{j=1}^{m_1} a_{i j} f_j + \sum_{k=1}^{m_2} b_{i k} h_k.
    \end{equation}
\item Divide $N_1-N_2$ towards $G(I_1+I_2)$, where $r$ is the remainder,
  \begin{equation}
    \label{eq:4}
    N_1-N_2=\sum_{i=1}^{m} \psi_i g_i + r.
  \end{equation}
\item
If $r=0$, rewrite
   \begin{equation}
    \label{eq:4b}
    N_1-N_2=\sum_i \sum_j \psi_i a_{i j} f_j  + \sum_i \sum_k \psi_i
    b_{i k} h_k \equiv F_1 + F_2,
  \end{equation}
where $F_1 \in I_1$ and $F_2\in I_2$. Then $\hat N=N_1-F_1$. If $r\not=0$, print warning message and stop.
\item Calculate the intersection $I_1\cap I_2$ and its Gr\"obner basis
  $G(I_1\cap I_2)$. Divide $\hat N$ towards $G(I_1\cap I_2)$ and the
  remainder is $N$.
\end{enumerate}
The validity of this algorithm is self-evident since
$\hat N=N_1-F_1=N_2+F_2$ so $[\hat N]_1=[N_1]_1$ and $[\hat
N]_2=[N_2]_2$. And as long as $N$ exists, $r$ must be zero.

For cases with more than $2$ branches, we just need to repeat the
above algorithm for $n-1$ times,
\begin{enumerate}
\item Let $J=I_1$, $f=N_1$.
\item For $i=1$ to $n-1$
  \begin{itemize}
  \item Carry out the 2-branch algorithm for polynomials $(f,N_{i+1})$ and the two
    ideals $(J,I_{i+1})$. Then redefine $f$ as the output polynomial.
  \item $J=J\cap I_{i+1}$.
\end{itemize}
\item $N=f$.
\end{enumerate}
The validity can be checked by induction. We realise this algorithm in
a Macaulay2 program \cite{M2}.

\section{Flavour Configurations in the Triple Box \label{app:flav}}

The pre-factors of the 34 configurations appearing in \eqref{eq:treeprod} are
given in Table \ref{tab:flavs}. We note that by combining these configurations
together with different colour factors the results presented here would also be valid in QCD.
The symmetry factor of $-1$ for each fermion loop is included in the pre-factor.

For completeness we also present the well-known tree level amplitudes used in this paper,
\begin{align}
  -iA_3^{(0)}(1_g^-,2_g^-,3_g^+) &= \frac{\A12^3}{\A23\A31} \\
  -iA_3^{(0)}(1_\qb^-,2_q^+,3_g^-) &= \frac{\A13^2}{\A12} \\
  -iA_3^{(0)}(1_\sb,2_s,3_g^-) &= \frac{\A13\A23}{\A12} \\
  -iA_3^{(0)}(1_q^-,2_q^+,3_s) &= \A12
  \label{eq:3trees}
\end{align}

\begin{align}
  -iA_4^{(0)}(1_g^-,2_g^-,3_g^+,4_g^+) &= \frac{\A12^3}{\A23\A34\A41} \\
  -iA_4^{(0)}(1_g^-,2_g^+,3_g^+,4_g^-) &= \frac{\A41^3}{\A12\A23\A34} \\
  -iA_4^{(0)}(1_g^-,2_g^+,3_g^-,4_g^+) &= \frac{\A13^4}{\A12\A23\A34\A41}
  \label{eq:4trees}
\end{align}

\begin{table}
  \centering
  \begin{tabular}[h]{|ccccc|c||ccccc|c|}
    \hline
    $f_1$ & $f_2$ & $f_3$ & $f_4$ & $f_5$ & $F_{f_1\cdots f_5}$ &
    $f_1$ & $f_2$ & $f_3$ & $f_4$ & $f_5$ & $F_{f_1\cdots f_5}$ \\
    \hline
    $g$ &   $g$ &   $g$ &   $g$ &   $g$ &   $1$ &
    $g$ &   $g$ &   $q$ & $g$ &   $q$ & $-n_f$ \\
    $g$ &   $q$ & $g$ &   $q$ & $q$ &   $-n_f$ &
    $q$ &   $g$ &   $g$ &   $q$ &   $g$ &   $-n_f$ \\
    $g$ &   $q$ & $q$ & $q$ & $g$ &   $-n_f$ &
    $q$ &   $q$ & $g$ &   $g$ &   $q$ &   $-n_f$ \\
    $q$ &   $q$ & $q$ & $g$ &   $g$ &   $-n_f$ &
    $g$ &   $g$ &   $s$ & $g$ &   $s$ & $n_s$ \\
    $g$ &   $s$ & $g$ &   $s$ & $s$ &   $n_s$ &
    $s$ &   $g$ &   $g$ &   $s$ &   $g$ &   $n_s$ \\
    $g$ &   $s$ & $s$ & $s$ & $g$ &   $n_s$ &
    $s$ &   $s$ & $g$ &   $g$ &   $s$ &   $n_s$ \\
    $s$ &   $s$ & $s$ & $g$ &   $g$ &   $n_s$ &
    $g$ &   $s$ & $q$ & $s$ & $q$ &   $-n_fn_s$ \\
    $s$ &   $s$ & $q$ & $g$ &   $q$ &   $-n_fn_s$ &
    $g$ &   $q$ & $s$ & $q$ & $q$ & $-n_fn_s$ \\
    $s$ &   $q$ & $g$ &   $q$ &   $q$ &   $-n_fn_s$ &
    $q$ &   $s$ & $s$ & $q$ & $g$ &   $-n_fn_s$ \\
    $q$ &   $s$ & $g$ &   $q$ & $s$ &   $-n_fn_s$ &
    $g$ &   $q$ & $q$ &   $q$ & $s$ &   $-n_fn_s$ \\
    $s$ &   $q$ & $q$ & $q$ &   $g$ &   $-n_fn_s$ &
    $q$ &   $q$ & $s$ & $g$ &   $q$ & $-n_fn_s$ \\
    $q$ &   $q$ &   $g$ &   $s$ &   $q$ & $-n_fn_s$ &
    $q$ &   $q$ & $q$ &   $g$ &   $s$ &   $-n_fn_s$ \\
    $q$ &   $q$ &   $q$ &   $s$ &   $g$ &   $-n_fn_s$ &
    $s$ &   $g$ &   $q$ & $s$ &   $q$ & $-n_fn_s$ \\
    $q$ &   $g$ &   $s$ & $q$ &   $s$ & $-n_fn_s$ &
    $q$ &   $g$ &   $q$ & $q$ &   $q$ & $n_f^2$ \\
    $s$ &   $g$ &   $s$ & $s$ &   $s$ & $n_s^2$ &
    $s$ &   $q$ &   $s$ & $q$ & $q$ &   $-n_fn_s$ \\
    $s$ &   $q$ & $q$ &   $q$ &   $s$ &   $-n_fn_s$ &
    $q$ &   $q$ &   $s$ & $s$ & $q$ &   $-n_fn_s$ \\
    $q$ &   $q$ &   $q$ & $s$ & $s$ &   $-n_fn_s$ &
    $q$ &   $s$ & $q$ & $q$ & $q$ &   $2n_fn_s$ \\
    \hline
  \end{tabular}
  \caption{The definitions of the flavour pre-factors in eq. \eqref{eq:treeprod}.}
  \label{tab:flavs}
\end{table}


\begin{thebibliography}{10}

\bibitem{Bern:2009kd}
Z.~Bern, J.~Carrasco, L.~J. Dixon, H.~Johansson, and R.~Roiban, {\it {The
  Ultraviolet Behavior of N=8 Supergravity at Four Loops}},  {\em
  Phys.Rev.Lett.} {\bf 103} (2009) 081301,
  [\href{http://xxx.lanl.gov/abs/0905.2326}{{\tt arXiv:0905.2326}}].

\bibitem{Bern:2010tq}
Z.~Bern, J.~Carrasco, L.~J. Dixon, H.~Johansson, and R.~Roiban, {\it {The
  Complete Four-Loop Four-Point Amplitude in N=4 Super-Yang-Mills Theory}},
  {\em Phys.Rev.} {\bf D82} (2010) 125040,
  [\href{http://xxx.lanl.gov/abs/1008.3327}{{\tt arXiv:1008.3327}}].

\bibitem{Bern:2012cd}
Z.~Bern, S.~Davies, T.~Dennen, and Y.-t. Huang, {\it {Absence of Three-Loop
  Four-Point Divergences in N=4 Supergravity}},  {\em Phys.Rev.Lett.} {\bf 108}
  (2012) 201301, [\href{http://xxx.lanl.gov/abs/1202.3423}{{\tt
  arXiv:1202.3423}}].

\bibitem{Carrasco:2011hw}
J.~J.~M. Carrasco and H.~Johansson, {\it {Generic multiloop methods and
  application to N=4 super-Yang-Mills}},  {\em J.Phys.A} {\bf A44} (2011)
  454004, [\href{http://xxx.lanl.gov/abs/1103.3298}{{\tt arXiv:1103.3298}}].

\bibitem{Bern:1994zx}
Z.~Bern, L.~J. Dixon, D.~C. Dunbar, and D.~A. Kosower, {\it {One loop n point
  gauge theory amplitudes, unitarity and collinear limits}},  {\em Nucl.Phys.}
  {\bf B425} (1994) 217--260,
  [\href{http://xxx.lanl.gov/abs/hep-ph/9403226}{{\tt hep-ph/9403226}}].

\bibitem{Bern:1994cg}
Z.~Bern, L.~J. Dixon, D.~C. Dunbar, and D.~A. Kosower, {\it {Fusing gauge
  theory tree amplitudes into loop amplitudes}},  {\em Nucl.Phys.} {\bf B435}
  (1995) 59--101, [\href{http://xxx.lanl.gov/abs/hep-ph/9409265}{{\tt
  hep-ph/9409265}}].

\bibitem{Britto:2004nc}
R.~Britto, F.~Cachazo, and B.~Feng, {\it {Generalized unitarity and one-loop
  amplitudes in N=4 super-Yang-Mills}},  {\em Nucl.Phys.} {\bf B725} (2005)
  275--305, [\href{http://xxx.lanl.gov/abs/hep-th/0412103}{{\tt
  hep-th/0412103}}].

\bibitem{Ossola:2006us}
G.~Ossola, C.~G. Papadopoulos, and R.~Pittau, {\it {Reducing full one-loop
  amplitudes to scalar integrals at the integrand level}},  {\em Nucl.Phys.}
  {\bf B763} (2007) 147--169,
  [\href{http://xxx.lanl.gov/abs/hep-ph/0609007}{{\tt hep-ph/0609007}}].

\bibitem{Mastrolia:2011pr}
P.~Mastrolia and G.~Ossola, {\it {On the Integrand-Reduction Method for
  Two-Loop Scattering Amplitudes}},  {\em JHEP} {\bf 1111} (2011) 014,
  [\href{http://xxx.lanl.gov/abs/1107.6041}{{\tt arXiv:1107.6041}}].

\bibitem{Kosower:2011ty}
D.~A. Kosower and K.~J. Larsen, {\it {Maximal Unitarity at Two Loops}},
  \href{http://xxx.lanl.gov/abs/1108.1180}{{\tt arXiv:1108.1180}}.

\bibitem{Badger:2012dp}
S.~Badger, H.~Frellesvig, and Y.~Zhang, {\it {Hepta-Cuts of Two-Loop Scattering
  Amplitudes}},  {\em JHEP} {\bf 1204} (2012) 055,
  [\href{http://xxx.lanl.gov/abs/1202.2019}{{\tt arXiv:1202.2019}}].

\bibitem{Larsen:2012sx}
K.~J. Larsen, {\it {Global Poles of the Two-Loop Six-Point N=4 SYM integrand}},
   \href{http://xxx.lanl.gov/abs/1205.0297}{{\tt arXiv:1205.0297}}.

\bibitem{CaronHuot:2012ab}
S.~Caron-Huot and K.~J. Larsen, {\it {Uniqueness of two-loop master contours}},
   \href{http://xxx.lanl.gov/abs/1205.0801}{{\tt arXiv:1205.0801}}.

\bibitem{Kleiss:2012yv}
R.~H. Kleiss, I.~Malamos, C.~G. Papadopoulos, and R.~Verheyen, {\it {Counting
  to one: reducibility of one- and two-loop amplitudes at the integrand
  level}},  \href{http://xxx.lanl.gov/abs/1206.4180}{{\tt arXiv:1206.4180}}.

\bibitem{Zhang:2012ce}
Y.~Zhang, {\it {Integrand-Level Reduction of Loop Amplitudes by Computational
  Algebraic Geometry Methods}},  \href{http://xxx.lanl.gov/abs/1205.5707}{{\tt
  arXiv:1205.5707}}.

\bibitem{Mastrolia:2012an}
P.~Mastrolia, E.~Mirabella, G.~Ossola, and T.~Peraro, {\it {Scattering
  Amplitudes from Multivariate Polynomial Division}},
  \href{http://xxx.lanl.gov/abs/1205.7087}{{\tt arXiv:1205.7087}}.

\bibitem{Chetyrkin:1981qh}
K.~Chetyrkin and F.~Tkachov, {\it {Integration by Parts: The Algorithm to
  Calculate beta Functions in 4 Loops}},  {\em Nucl.Phys.} {\bf B192} (1981)
  159--204.

\bibitem{Gehrmann:1999as}
T.~Gehrmann and E.~Remiddi, {\it {Differential equations for two loop four
  point functions}},  {\em Nucl.Phys.} {\bf B580} (2000) 485--518,
  [\href{http://xxx.lanl.gov/abs/hep-ph/9912329}{{\tt hep-ph/9912329}}].

\bibitem{Gluza:2010ws}
J.~Gluza, K.~Kajda, and D.~A. Kosower, {\it {Towards a Basis for Planar
  Two-Loop Integrals}},  {\em Phys.Rev.} {\bf D83} (2011) 045012,
  [\href{http://xxx.lanl.gov/abs/1009.0472}{{\tt arXiv:1009.0472}}].

\bibitem{Schabinger:2011dz}
R.~M. Schabinger, {\it {A New Algorithm For The Generation Of
  Unitarity-Compatible Integration By Parts Relations}},  {\em JHEP} {\bf 1201}
  (2012) 077, [\href{http://xxx.lanl.gov/abs/1111.4220}{{\tt
  arXiv:1111.4220}}].

\bibitem{Laporta:2001dd}
S.~Laporta, {\it {High precision calculation of multiloop Feynman integrals by
  difference equations}},  {\em Int.J.Mod.Phys.} {\bf A15} (2000) 5087--5159,
  [\href{http://xxx.lanl.gov/abs/hep-ph/0102033}{{\tt hep-ph/0102033}}].

\bibitem{Anastasiou:2004vj}
C.~Anastasiou and A.~Lazopoulos, {\it {Automatic integral reduction for higher
  order perturbative calculations}},  {\em JHEP} {\bf 0407} (2004) 046,
  [\href{http://xxx.lanl.gov/abs/hep-ph/0404258}{{\tt hep-ph/0404258}}].

\bibitem{Smirnov:2008iw}
A.~Smirnov, {\it {Algorithm FIRE -- Feynman Integral REduction}},  {\em JHEP}
  {\bf 0810} (2008) 107, [\href{http://xxx.lanl.gov/abs/0807.3243}{{\tt
  arXiv:0807.3243}}].

\bibitem{Studerus:2009ye}
C.~Studerus, {\it {Reduze-Feynman Integral Reduction in C++}},  {\em
  Comput.Phys.Commun.} {\bf 181} (2010) 1293--1300,
  [\href{http://xxx.lanl.gov/abs/0912.2546}{{\tt arXiv:0912.2546}}].

\bibitem{vonManteuffel:2012yz}
A.~von Manteuffel and C.~Studerus, {\it {Reduze 2 - Distributed Feynman
  Integral Reduction}},  \href{http://xxx.lanl.gov/abs/1201.4330}{{\tt
  arXiv:1201.4330}}.

\bibitem{Berger:2008sj}
C.~Berger, Z.~Bern, L.~Dixon, F.~Febres~Cordero, D.~Forde, {\em et.~al.}, {\it
  {An Automated Implementation of On-Shell Methods for One-Loop Amplitudes}},
  {\em Phys.Rev.} {\bf D78} (2008) 036003,
  [\href{http://xxx.lanl.gov/abs/0803.4180}{{\tt arXiv:0803.4180}}].

\bibitem{Mastrolia:2008jb}
P.~Mastrolia, G.~Ossola, C.~Papadopoulos, and R.~Pittau, {\it {Optimizing the
  Reduction of One-Loop Amplitudes}},  {\em JHEP} {\bf 0806} (2008) 030,
  [\href{http://xxx.lanl.gov/abs/0803.3964}{{\tt arXiv:0803.3964}}].

\bibitem{M2}
D.~R. Grayson and M.~E. Stillman, ``Macaulay2, a software system for research
  in algebraic geometry.'' Available at
  \href{http://www.math.uiuc.edu/Macaulay2/}%
  {http://www.math.uiuc.edu/Macaulay2/}.

\bibitem{Smirnov:2003vi}
V.~A. Smirnov, {\it {Analytical result for dimensionally regularized massless
  on shell planar triple box}},  {\em Phys.Lett.} {\bf B567} (2003) 193--199,
  [\href{http://xxx.lanl.gov/abs/hep-ph/0305142}{{\tt hep-ph/0305142}}].

\bibitem{Bern:2005iz}
Z.~Bern, L.~J. Dixon, and V.~A. Smirnov, {\it {Iteration of planar amplitudes
  in maximally supersymmetric Yang-Mills theory at three loops and beyond}},
  {\em Phys.Rev.} {\bf D72} (2005) 085001,
  [\href{http://xxx.lanl.gov/abs/hep-th/0505205}{{\tt hep-th/0505205}}].

\end{thebibliography}
\providecommand{\href}[2]{#2}\begingroup\raggedright\endgroup

\end{document}